\documentclass[conference]{IEEEtran}
%\documentclass[letterpaper]{IEEEtran}
%\documentclass[10pt,twocolomun]{IEEEtran}
%\documentclass[
%        12pt, journal, final,twocolumn,letterpaper,
%    %submission,
%    %compressed,
%    %final,
%    %
%    %technote,
%    %internal,
%    %submitted,
%    %inpress,
%    %reprint,
%    %
%    %titlepage,
%    %notitlepage,
%    %anonymous,
%    %narroweqnarray,
%    %inline,
%    %twoside,stream
%    %invited,
%    ]{IEEEtran}
%\documentclass{report}
\usepackage[T1]{fontenc}% optional T1 font encoding

\def\argmin{\mathop{\rm \arg\!\min}}
\usepackage{graphicx}
\usepackage[noadjust]{cite}
\usepackage{mcite}
\usepackage{amsfonts,helvet}
\usepackage{fancyhdr}
\usepackage{threeparttable}
\usepackage{epsf,epsfig}
\usepackage{amsthm}
\usepackage{amsmath}
\usepackage{siunitx}
\usepackage{amssymb}
\usepackage{dsfont}
\usepackage{subfigure}
\usepackage{color}
\usepackage{algorithm}
\usepackage{algpseudocode}
\usepackage{algcompatible}
\usepackage{enumerate}
\usepackage{hyperref}
\usepackage{cancel}
\usepackage{bbm}
\usepackage{dsfont}
\usepackage[subnum]{cases}
\usepackage{adjustbox}
\usepackage{blindtext}
\usepackage{tasks}
\usepackage[inline]{enumitem}
\usepackage{scrextend}
\addtokomafont{labelinglabel}{\mdseries }
\usepackage{array}

\makeatletter
\newcommand{\thickhline}{%
    \noalign {\ifnum 0=`}\fi \hrule height 1pt
    \futurelet \reserved@a \@xhline
}
\newcolumntype{"}{@{\hskip\tabcolsep\vrule width 1pt\hskip\tabcolsep}}
\makeatother

\newcommand{\fig}[1]{Fig.\ \ref{#1}}

\newtheorem{proposition}{Proposition}

\newmuskip\pFqmuskip

% Bold lowercase

% Bold capital letters

% Caligraphic capital letters

\setcounter{page}{1}
%\newcounter{algo}
%\newcounter{proposition}
\setcounter{proposition}{0}
\IEEEoverridecommandlockouts

\addtolength{\topmargin}{0.019 in}

\title{ADC Bit Optimization for Spectrum- and Energy-Efficient Millimeter Wave Communications}
\author{Jinseok Choi$^\dagger$, Junmo Sung$^\dagger$, Brian L. Evans$^\dagger$, and Alan Gatherer$^*$ 
\thanks{This research is supported by gift funding from Huawei Technologies.} 
\\
\IEEEauthorblockA{\normalsize{$^\dagger$Wireless Networking and Communications Group, The University of Texas at Austin}\\
Email: \{jinseokchoi89, junmo.sung\}@utexas.edu, bevans@ece.utexas.edu\\
$^*$Systems and Design for Wireless Communications, Huawei USA\\
Email: alan.gatherer@huawei.com}}

\begin{document}
\maketitle

%%%%%%%%%%%%%%%%%%%%%%%%%%%%
\begin{abstract}

A spectrum- and energy-efficient system is essential for millimeter wave communication systems that require large antenna arrays with power-demanding ADCs.
We propose an ADC bit allocation (BA) algorithm that solves a minimum mean squared quantization error problem under a power constraint.
Unlike existing BA methods that only consider an ADC power constraint, the proposed algorithm regards total receiver power constraint for a hybrid analog-digital beamforming architecture. 
The major challenge is the non-linearities in the minimization problem.
To address this issue, we first convert the problem into a convex optimization problem through real number relaxation and substitution of ADC resolution switching power with constant average switching power.
Then, we derive a closed-form solution by fixing the number of activated radio frequency (RF) chains $M$.
Leveraging the solution, the binary search finds the optimal $M$ and its corresponding optimal solution.
We also provide an off-line training and modeling approach to estimate the average switching power.
Simulation results validate the spectral and energy efficiency of the proposed algorithm.
In particular, existing state-of-the-art digital beamformers can be used in the system in conjunction with the BA algorithm as it makes the quantization error negligible in the low-resolution regime.
\end{abstract}
%\begin{IEEEkeywords}
%Millimeter wave, massive MIMO, hybrid, low-resolution ADC, ADC bit allocation.
%\end{IEEEkeywords}
%%%%%%%%%%%%%%%%%%%%%%%%%%%%

%%%%%%%%%%%%%%%%%%%%%%%%%%%%
\section{Introduction}
\label{sec:intro}
%%%%%%%%%%%%%%%%%%%%%%%%%%%%

Millimeter wave (mmWave) communications significantly increase the data rate by using very large bandwidth \cite{pi2011introduction}.
Large antenna arrays with power-demanding ADCs used to compensate for large pathloss, however, causes problems regarding hardware power and cost.
%Employing large antenna arrays, however, is essential for mmWave communications to compensate for large pathloss, and causes practical challenges regarding hardware cost and power consumption: the deployment of large antenna arrays with power-demanding ADCs.
% practical challenges regarding hardware cost and power consumption occur due to the deployment of large antenna arrays with power-demanding ADCs.
%To address the challenges, comprehensive 
Previous studies \cite{niu2015survey, heath2016overview} consider two major architectures: a hybrid analog-digital architectures \cite{han2015large} and low-resolution ADC receiver \cite{mo2015capacity}.
The former employs analog beamforming to decrease the number of RF chains to be less than that of antennas, thereby mitigating the burden on digital beamforming, and the latter adopts a small number of quantization bits to reduce ADC power consumption.
Considering a hybrid architecture with low-resolution ADCs \cite{mo2016hybrid, choi2017low}, we propose an algorithm for optimizing ADC bits to realize spectrum- and energy-efficient mmWave communications.

%In the hybrid MIMO system, an analog beamformer is the key element that reduces the number of RF chains and it is often designed by choosing array steering vectors for dominant channel paths \cite{el2014spatially,alkhateeb2014channel}. 
%The state of the art hybrid beamforming design algorithms have been proposed with the goal of achieving spectral efficiency close to that of the system with fully digital beamformers \cite{el2012low, el2014spatially, liang2014low, alkhateeb2015limited, sohrabi2015hybrid}.
%Prior work \cite{el2012low,alkhateeb2013hybrid,el2014spatially} developed  beamformer design algorithms by using orthogonal matching pursuit.
%Low-complexity hybrid precoding algorithms in multi-user MIMO downlink systems were proposed by considering zero-forcing precoding \cite{liang2014low} and limited feedback \cite{alkhateeb2015limited}.
%In \cite{sohrabi2015hybrid}, it was shown that the number of RF chains are required to be at least twice the number of data streams to realize the performance of fully digital beamforming.

% Low-Resolution ADCs
Due to an exponential increase of ADC power with increasing quantization bits, a low-resolution ADC has drawn attention \cite{wang2014multiuser,mo2014channel, mo2016channel, choi2016near}.
%Moreover, it was revealed that least-squares channel estimation and maximum-ratio combining (MRC) with 1-bit ADCs are sufficient to support multi-user operation and quadrature-phase-shift-keying  \cite{risi2014massive}, which is known to be optimal for certain 1-bit ADC systems \cite{mezghani2007ultra, mo2015capacity}.
% or even higher-order modulation \cite{jacobsson2015one}. 
Detection and channel estimation techniques compensate for large distortion from using low-resolution ADCs \cite{wang2014multiuser,mo2014channel, mo2016channel, choi2016near} by leveraging large antenna arrays.
For instance, message-passing algorithms achieved better multi-user detection than minimum mean squared error (MMSE) receivers \cite{wang2014multiuser}.
Exploiting the sparsity of mmWave channels in the angular domain, a generalized approximate message-passing algorithm with 1-bit ADCs showed a comparable channel estimation performance as maximum-likelihood (ML) estimator with full-resolution ADCs \cite{mo2014channel}, and it was further extended to wideband mmWave channels in \cite{mo2016channel}.
% in the low and medium SNR regimes \cite{mo2014channel} by 
%It was further proved in \cite{mo2016channel} that accurate estimation is also possible to wideband mmWave channel estimation by combining GAMP with the expectation-maximization algorithm.
For Rayleigh MIMO channels, the near ML detector and channel estimator were developed \cite{choi2016near}.
% and In \cite{studer2016quantized}, a maximum {\it a posteriori} based channel estimator disclosed that near optimal performance can be achieved by using 4 to 6-bit ADCs in wideband communications.
%To examine the effect of coarse quantization in terms of achievable rate, 
%The Bussgang decomposition \cite{mezghani2012capacity,li2016channel} and additive quantization noise model (AQNM) \cite{orhan2015low,fan2015uplink,zhang2016spectral} which provide linearized quantization models were adopted to analyze the coarse quantization effect.
%were utilized providing the linear expressions of quantization operation. 
Mixed-resolution architectures were also proposed in which ADC resolution can switch between a 1-bit ADC and high-resolution ADC \cite{liang2016mixed}.
%depending on its channel gain. 
A hybrid architecture with low-resolution ADCs was investigated \cite{mo2016hybrid}, and an ADC bit allocation (BA) algorithm was further developed \cite{choi2017low} with resolution-adaptive ADCs. 
The ADC-switching approach in \cite{liang2016mixed} %provides each antenna with an opportunity to use different ADC resolution depending on channel gains, it 
is limited to binary switching between 1-bit and high-resolution ADCs, which are highly power demanding.
Accordingly, the BA approach for mmWave communication in \cite{choi2017low} improved spectral and energy efficiency from conventional systems where ADC resolutions are the same.
In \cite{choi2017low}, the BA algorithm considers only the total ADC power consumption, not the total receiver power consumption.
A more comprehensive BA approach encompassing the entire receiver power consumption should be considered.

In this paper, we propose a BA algorithm minimizing mean squared quantization error (MSQE) of desired signals under constrained total receiver power for a hybrid receiver architecture with resolution-adaptive ADCs.
An analog combiner is designed by using array steering vectors to reduce the number of RF chains.
Then, we formulate a minimum MSQE (MMSQE) problem subject to a total receiver power constraint.
This non-convex minimization problem has non-linear elements such as ADC-switching power and the number of activated RF chains, which makes the solution intractable.
% without exhaustive search.
Therefore, we relax the problem formulation to make it convex and derive a near optimal low-complexity solution.
%Therefore, modify the problem with relaxations to achieve a near optimal solution with low-complexity.
First, we relax the integer problem to a real number problem in order  to use Karush-Kuhn-Tucker (KKT) conditions. 
We, then, replace the switching power with a constant average switching power, of which we further provide an off-line approach to estimate.
%, and , it does not increase the complexity of the proposed BA algorithm. 
To resolve non-linearity of the number of activated RF chains, we derive a closed-form solution by fixing the number of activated RF chains $M$, and find the optimal $M$ with the smallest MSQE through binary search.
Simulation results validate the performance of the proposed BA algorithm.
% in terms of uplink spectral efficiency and energy efficiency.

% FIGURE 
\begin{figure}[t]\centering
\includegraphics[scale = 0.3]{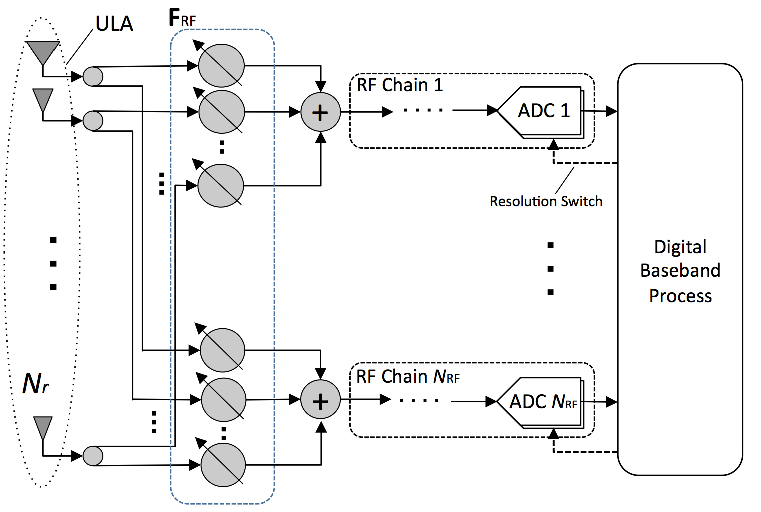}
\caption{Hybrid massive MIMO reciever with resolution-adaptive ADCs.} 
\label{fig:system}
\end{figure}

%%%%%%%%%%%%%%%%%%%%%%%%%%%%
\section{System Model}
\label{sec:sys_model}
%%%%%%%%%%%%%%%%%%%%%%%%%%%%

\subsection{Signal and Channel Models}
We consider single-cell MIMO uplink networks in which a base station (BS) with $N_r$ antennas serves $N_{u}$ users with a single antenna ($N_r \gg N_u$).
The hybrid architecture with low-resolution ADCs is adopted at the BS, which employs uniform linear array (ULA) and analog combiners.
$N_{\rm RF}$ RF chains are connected to $N_{\rm RF}$ pairs of ADCs.
Without loss of generality, we assume that the number of RF chains is less than the number of antennas ($N_{\rm RF} < N_r$), reducing the power consumption and complexity at the BS. 
As shown in Fig. \ref{fig:system}, resolution-adaptive ADCs such as flash ADCs  \cite{yoo2002power}.
%, the most suitable ADCs when over Gbps sampling rate is required with moderate resolution \cite{le2005analog}, are considered. 
%are considered such as flash ADCs \cite{yoo2002power}, that are the most suitable ADCs when over Gbps sampling rate is required with moderate resolution \cite{le2005analog}.

%, and hence, we consider resolution adaptive flash ADCs with resolution switches at the BS.

%The switches update at every channel coherence time.
%+ For 200 MHz signal bandwidth and 1-4 bit resolution, comparators consume most power
%+ Resolution can be reduced by disabling comparators, which would scale power consumption exponentially in the number of bits of resolution
%+ 
%Flash ADCs could be designed for basestations to change the resolution according to channel gains and achieve significant power savings while having minimal error (as shown in our ICASSP paper)

We assume mmWave channels in which limited $L$ scatterings ($L \ll N_r$) contribute to $L$ propagation paths \cite{el2014spatially}.
% We consider $L > 1$ for multiple scatterers.
%Adopting a geometric channel model with $L$ scatterers that contribute $L$ propagation paths, 
Then, the $k^{th}$ user channel is represented as
\begin{align}
\label{eq:channel_geo}
{\bf h}_k = \sqrt{\gamma_k}\sum_{\ell = 1}^{L}\omega^k_\ell {\bf a}(\theta^k_\ell) \in \mathbb{C}^{N_r}
\end{align}
where $\gamma_k$ denotes the large scale fading gain between the BS and $k^{th}$ user and includes geometric attenuation, shadow fading, and noise power.
The vector ${\bf a}(\theta^k_{\ell})$ and $\omega^k_{\ell}$ represent a BS antenna array steering vector with the azimuth angle of arrival (AoA) of the $\ell^{th}$ path for the $k^{th}$ user $\theta^k_{\ell} \in [-\pi/2,\pi/2]$, and complex gain of the $\ell^{th}$ path for the $k^{th}$ user, respectively. 
We consider that $\omega^k_{\ell}$ is an independent and identically distributed (IID) complex Gaussian random variable $\omega^k_{\ell} \sim \mathcal{CN}(0, 1)$.
%The number of propagation paths $L_k$ is assumed to be distributed as $L_k \sim \max \big\{Poisson(\lambda_{\rm L}), 1\big\}$ \cite{akdeniz2014millimeter} with $\lambda_{\rm L} \in \mathbb{R}$ for $k = 1,\cdots,N_u$. 

For ULA, the array steering vector ${\bf a}(\theta)$ becomes a vector whose $n^{th}$ entry is $\frac{1}{\sqrt{N_r}}e^{-j \frac{2\pi (n-1)d}{\lambda}\sin(\theta)}$
% is given by
%\begin{align}
%\nonumber
% {\bf a}(\theta) = \frac{1}{\sqrt{N_r}}\Big[1,e^{-jv_1\sin(\theta)},e^{-j v_2\sin(\theta)},\dots,e^{-j v_{N_r -1}\sin(\theta)}\Big]^\intercal 
%\end{align}
%with $v_n = \frac{2\pi nd}{\lambda}$, 
where $\lambda$ is the wave length and $d$ is the distance between antenna elements.
Assuming uniformly-spaced spatial angle $\frac{d}{\lambda}\sin(\theta_i) = (i-1)/N_r$, the matrix of array steering vectors
$\mathbf{A}=\big[{\bf a}(\theta_1),\cdots,{\bf a}(\theta_{N_r})\big]$
becomes a unitary discrete Fourier transform matrix. 
Then, the channel vector ${\bf h}_k$ \eqref{eq:channel_geo} can be modeled as \cite{sayeed2002deconstructing}
% EQUATION
\begin{align} 
\nonumber
{\bf h}_k &= {\bf A}{ \tilde{ \bf h}}_{{\rm b},k}=\sum_{i = 1}^{N_r}{ \tilde h}_{{\rm b},(i,k)}\,{\bf a}(\theta_i)
\end{align}
where $\tilde{\bf h}_{{\rm b},k} \in \mathbb{C}^{N_r}$ denotes the beamspace channel that includes both the $L$ complex gains $\sim \mathcal{CN}(0,1)$ and the large scale fading gain $\sqrt{\gamma_k}$.
The beamspace channel matrix becomes $\tilde{\bf H}_{\rm b} = [\tilde{\bf h}_{{\rm b},1},\cdots,  \tilde{\bf h}_{{\rm b},N_u}]=\tilde{\bf  G}{\bf D}^{1/2}$ where $\tilde{\bf  G} \in \mathbb{C}^{N_r \times N_u} $ is the sparse matrix that only contains complex path gains and ${\bf D} = {\rm diag}(\gamma_1,\cdots,\gamma_{N_u})$ is the diagonal matrix of large scale fading gains.
Accrodingly, the beamspace channel vector becomes $ \tilde{\bf h}_{{\rm b},k} = \sqrt{\gamma_k}\tilde {\bf g}_k$.
Finally, the channel matrix ${\bf H} = [{\bf h}_1, \cdots,{\bf h}_{N_u}]$ is given as
\begin{align}
 \label{eq:channel}
{\bf H} =  {\bf A} \tilde {\bf H}_{\rm b} = {\bf A}\tilde {\bf G }{\bf D}^{1/2}.
\end{align}

The received baseband analog signal ${\bf r} \in \mathbb{C}^{N_r}$ with a narrowband channel assumption is expressed as
\begin{align}
	\label{eq:rx_signal}
	{\bf r} = \sqrt{p_u}{\bf Hs} + \tilde {\bf n}
\end{align} 
where $p_u$, ${\bf s}  \in \mathbb{C}^{N_u}$, and $ \tilde {\bf n} \in \mathbb{C}^{N_r}$ represent the transmit power, the symbol vector of $N_u$ users, and the additive white Gaussian noise, respectively. 
We assume Gaussian signaling for $\bf s$ and a normalized noise variance $\tilde {\bf n}  \sim \mathcal{CN}(\mathbf{0}, \mathbf{I}_{N_r})$. 
%A user symbol has zero mean and unit variance.
%We further consider that the transmitted signal vector $\bf s$ is Gaussian distributed with a zero mean and unit variance.
%, i.e., $\mathbb{E}\big[s_i\big] = 0$ and $\mathbb{E}\big[|s_i|^2\big] = 1$.
We consider that the channel $\bf H$ is perfectly known at the BS.

An analog beamformer ${\bf F_{\rm RF}}  \in \mathbb{C}^{N_r \times N_{\rm RF} }$ is applied to ${\bf r}$ in \eqref{eq:rx_signal} and its elements are constrained to have the equal norm of $1/\sqrt{N_r}$.
% $[{\bf F_{\rm RF} }{\bf F}^H_{\rm RF}]_{i,i} = 1/N_r$, i.e., all element of $\bf F_{\rm RF}$ have equal norm of $1/\sqrt{N_r}$.
% EQUATION
%\begin{align}
%\label{eq:rx_rf_signal} 
%\mathbf{ y}
%={\bf F}^H_{\rm RF}\,\mathbf{r} 
%= \sqrt{p_u}\mathbf{F}^H_{\rm RF}\,\mathbf{Hs} + \mathbf{F}^H_{\rm RF}\,\bf n.
%\end{align} 
%From the mmWave channel model \ref{eq:channel}, 
We assume that ${\bf F_{\rm RF}}$  is composed of $N_{\rm RF}$ array response vectors corresponding to the $N_{\rm RF}$ strongest propagation paths \cite{el2012capacity}: ${\bf F_{\rm RF}} = {\bf A}_{\rm RF}$ where ${\bf  A}_{\rm RF}$ is a $N_r \times N_{\rm RF}$ sub-matrix of $\bf A$.
% (the matrix $\bf \tilde A$ consists of $N_{\rm RF}$ columns of $\bf A$).
We consider that $ {\bf A}_{\rm RF}$ embraces all propagation paths from $N_u$ users. %\cite{kim2015virtual}.
 % $N_{\rm RF} $ is considered to be large enough to capture all non-zero channel path gains from $N_u$ users.
%Since AoA is the long-term statistics of a channel, 
%Consequently, $\bf A$ includes ${\bf a}(\theta_i)$ for all $L$ dominant AoAs of $N_u$ users. 
%through AoA search training at the BS.
%This analog beamforming allows the BS to see the beamspace received signals at $N_{\rm RF}$ arrival angles. 
Then, the received signal becomes
% EQUATION
\begin{align} 
\label{eq:y}
{\bf {y}}  = \sqrt{p_u}{\bf  A}_{\rm RF}^H{\bf Hs} +  {\bf A}_{\rm RF}^H \tilde  {\bf n} =\sqrt{p_u} {\bf  H_{\rm b}}{\bf s} + {\bf n} 
\end{align}
where $  {\bf n} =  {\bf A}_{\rm RF}^H \tilde{\bf n} \sim \mathcal{CN}(\mathbf{0}, \mathbf{I}_{N_{\rm RF}})$ as $\bf A$ is unitary, and $(\cdot)^H$ denotes the conjugate transpose. 
Note that ${\bf y}$ is a $N_{\rm RF}\times 1$ vector, and ${\bf H}_{\rm b}$ is a $N_{\rm RF} \times N_u$ beamspace channel matrix that can be decomposed as
% which contains $\sum_{k=1}^{N_u}L_k$ channel path gains and 
%\begin{align}
%\label{eq:RFchannel}
${\bf H_{\rm b} = GD}^{1/2}$
%\end{align}
where ${\bf G }\in \mathbb{C}^{N_{\rm RF} \times N_u} $ is the sub-matrix of $ \tilde {\bf G}$ that corresponds to ${\bf A}_{\rm RF}$.

%%%%%%%%%%%%%%%%%%%%%%%%%%%
\subsection{Quantization Model}
\label{subsec:quantization}
%%%%%%%%%%%%%%%%%%%%%%%%%%%

Each output $y_i$ is quantized at the $i^{th}$ pair of ADCs as shown in \fig{fig:system}.
Each ADC in the pair quantizes either a real or imaginary part of $y_i$ with the same number of quantization bits $b_i$.
Adopting the additive quantization noise model (AQNM) which provides reasonable accuracy in low and medium SNR ranges \cite{orhan2015low}, we have the linear quantization expression of $\bf y$
%, which corresponds to the mmWave communication operating SNR regime.  
%After quantizing $\mathbf{y}$, we have the quantized signal vector
% EQUATION
\begin{align} 
	\nonumber
	\mathbf{y}_{\rm q}&=\mathcal{ Q}(\mathbf{ y}) = \mathbf{W}_{\alpha} \,\mathbf{ y}+ \mathbf{n}_{\rm q}
	\\ \label{eq:AQNM2}
	& = \sqrt{p_u}{\bf W_\alpha  H_{\rm b}}{\bf s + W_\alpha} {\bf n} + {\bf n}_{\rm q}
% = \sqrt{p_u}\mathbf{W}_{\alpha}\mathbf{F}^H_{\rm RF}\,\mathbf{Hs} + \mathbf{W}_{\alpha} \mathbf{F}^H_{\rm RF}\,\mathbf{n} + \mathbf{n}_{\rm q}
\end{align} 
where $\mathcal{Q}(\cdot)$ is the element-wise quantizer function for each real and imaginary part,
%separately applied to real and imaginary parts 
and $\mathbf{W}_\alpha$ is the diagonal quantization gain matrix $\mathbf{W}_\alpha =  {\rm diag}(\alpha_1,\cdots, \alpha_{N_{\rm RF}})$.
The quantization gain is defined as $\alpha_i = 1- \beta_i$ where $\beta_i = \frac{\mathbb{E}[|{y}_i - {y}_{{\rm q}i}|^2]}{\mathbb{E}[|{y}_i|^2]} $.
Note that $\mathbb{E}\big[|{y}_i - {y}_{{\rm q}i}|^2\big]$ is the MSQE of $y_i$.
% where $ {y}_{{\rm q},i}$ is the quantized output for ${y}_i$.
Under the assumptions of a scalar MMSE quantizer and Gaussian signaling, Table 1 in \cite{fan2015uplink} lists the values of $\beta_i$ for $b_i \leq 5$.
For $b_i > 5$, $\beta_i$ can be modeled as $\beta_i = \frac{\pi\sqrt{3}}{2} 2^{-2b_i}$ \cite{orhan2015low}. 
%Note that $b_i$ is the number of quantization bits for each real and imaginary part of $ y_i$.
The quantization noise $\mathbf{n}_{\rm q}$ is an additive noise that is uncorrelated with $\bf  y$.
The noise follows the zero-mean complex Gaussian distribution, and the covariance matrix for a fixed channel realization $ {\bf H}_{\rm b}$ is
% EQUATION
\begin{align}
\label{eq:cov2}
\mathbf{R}_{\mathbf{n}_{\rm q}}= {\bf W}_\alpha {\bf W}_\beta \,{\rm diag}(p_u{\bf H_{\rm b}}{\bf H}_{\rm b}^H + {\mathbf{I}_{N_{\rm RF}}})
\end{align}
where $\mathbf{W}_\beta =  {\rm diag}(\beta_1,\cdots, \beta_{N_{\rm RF}})$, and ${\rm diag}(p_u{\bf H_{\rm b}}{\bf H}_{\rm b}^H + {\mathbf{I}_{N_{\rm RF}}})$ takes the diagonal entries of $ p_u{\bf H_{\rm b}}{\bf H}_{\rm b}^H + {\mathbf{I}_{N_{\rm RF}}}$.

% considering the non-linear scalar MMSE quantizer. 

% TABLE
%\begin{table}[!t]
%\label{tb:beta}
%\centering
%\caption{The Values of $\beta$ for Different Quantization Bits $b$ }
%\begin{tabular}{ l c c c c c }
%  \thickhline
%$ b$  & 1 & 2 & 3 & 4 & 5\\
%  \hline
% $\beta$   & 0.3634 & 0.1175 & 0.03454 & 0.009497 & 0.002499 \\
%  \thickhline
%\end{tabular}
%\end{table}

%%%%%%%%%%%%%%%%%%%%%%%%%%%%
\section{ADC Bit Optimization}
\label{sec:Problem}
%%%%%%%%%%%%%%%%%%%%%%%%%%%%

In this section, we develop a BA algorithm that provides near optimal ADC bit configurations which minimize the quantization error under constrained total power consumption.
%Since the power constraint is non-linear, we propose an algorithm to solve the minimization problem with low-complexity. 

%%%%%%%%%%%%%%%%%%%%%%%%%%%
\subsection{Problem Formulation}
\label{ssec:formulation}
%%%%%%%%%%%%%%%%%%%%%%%%%%%

We first formulate a minimization problem by adopting the MSQE as a distortion measure.
We remark that the MSQE of desired signals, ${\bf x} =\sqrt{p_u} {\bf  H_{\rm b}}\bf s$, rather than received signals, $\bf y$ \eqref{eq:y}, is considered since the BA method on the desired signals is shown to be robust to noise \cite{choi2017low}.
Applying the AQNM to the desired signals, %, the quantized ${\bf x}$ is expressed as
%\begin{align}
%{\bf x}_{\rm q} =  \sqrt{p_u}{\bf W}_{\alpha}{\bf H}_{\rm b}{\bf s} + \hat {\bf n}_{\rm q}
%\end{align}
%where $ \hat {\bf n}_{\rm q}$ is the additive quantization noise that is uncorrelated with $\bf x_{\rm q}$.
%The covariance matrix of $ \hat {\bf n}_{\rm q} $ becomes
%\begin{align}
%\label{eq:cov_rev}
%\mathbf{R}_{\hat {\mathbf{n}}_{\rm q}}= {\bf W}_\alpha {\bf W}_\beta \,{\rm diag}(p_u{\bf H_{\rm b}}{\bf H}_{\rm b}^H).
%\end{align}
the MSQE of the $i^{th}$ desired signal $x_i $ with the non-linear scalar MMSE quantizer for $b_i > 5$ is \cite{orhan2015low}
\begin{align}
\label{eq:msqe_rev}
{\mathcal{E}}_{i}(b_i) 
&= \mathbb{E}\Big[|{x}_i - {x}_{{\rm q}i}|^2\Big]= \frac{\pi\sqrt{3}}{2}\sigma_{x_i}^2\,2^{-2b_i}
\end{align}
where $x_{{\rm q},i}$ is the quantized ${ x}_i$ and $\sigma^2_{x_i} = p_u\|[{\bf H}_{\rm b}]_{i,:}\|^2$. 
Note that $\sigma^2_{x_i}$ is known at the BS as we assume that the BS knows the channel \eqref{eq:channel}.
We consider \eqref{eq:msqe_rev} to hold for any $b_i$ when formulating a minimization problem.  

To calculate receiver power consumption, we adopt power consumption models for ADC quantization and resolution switching.
The ADC quantization power consumption is \cite{orhan2015low} 
\begin{align}
	\label{eq:ADCpower}
	P_{\rm ADC}(b)= c\, f_s\,2^{b}
\end{align}
where $c$ is the energy consumption per conversion step (conv-step), called Walden's figure-of-merit, $f_s$ is the Nyquist sampling rate, and $b$ is the number of quantization bits. 
\eqref{eq:ADCpower} implicitly implies that $P_{\rm ADC}(b) = 0$ for $b = 0$.
\textcolor{black}{Note that additional power can be consumed due to switching resolution, and the resolution switching power consumption is modeled as} \cite{choi2017low}
\begin{align}
	\label{eq:Psw_model}
	P_{\rm SW}(b) = c_{\rm sw}\big|2^{b} - 2^{b^{\rm p}}\big|
\end{align} 
where $b^{\rm p}$ is the number of previous ADC bits. 
The power consumption per conversion step $c_{\rm sw}$ has different values for increasing $(b > b^{\rm p})$ and decreasing resolution $(b < b^{\rm p})$ cases.
\textcolor{black}{This model \eqref{eq:Psw_model} implies that the switching power is proportional to the number of comparators that are turned on or off.}
$P_{\rm SW}(b)$ becomes zero when no change in resolution ($b = b^{\rm p}$).
 %= 3.47$ (or $0.94$) mW/conv-step if the resolution increases, $b_i^{\rm n} > b_i$ (or decreases, $b_i < b_i^{\rm p}$).
%When there is no resolution change ($b_i = b_i^{\rm p}$), $P_{\rm SW}(b_i)$ simply becomes zero.
%This model illustrates that the power consumption scales exponentially in the number of quantization bits $b$.

Let $P_{\rm LNA}$, $P_{\rm PS}$, $P_{\rm RFchain}$, and $P_{\rm BB}$ indicate the power consumption in the low-noise amplifier, phase shifter, RF chain, and baseband processor, respectively.
We also consider that RF processes associated with deactivated ADCs (0-bit ADCs) after bit allocation are turned off so that no power consumption occurs in the RF processes.
Accordingly, the receiver power consumption of the system in Fig. \ref{fig:system} is
\begin{align}
	\nonumber
	P_{\rm tot} =& N_rP_{\rm LNA} + N_{\rm act}(N_rP_{\rm PS} + P_{\rm RFchain}) \\
	\label{eq:total_power}
	&+ 2\sum_{i=1}^{N_{\rm RF}}\Big(P_{\rm ADC}(b_i) + P_{\rm SW}(b_i)\Big)+ P_{\rm BB} 
\end{align}
where $N_{\rm act}$ is the number of activated RF chains. 
%and therefore, it is a function of the number of quantization bits.

Using the MSQE of desired signals \eqref{eq:msqe_rev} and the receiver power \eqref{eq:total_power}, the minimum MSQE problem under a power constraint $p$ is formulated as
% EQUATION
\begin{gather}
	\label{eq:MMSQE_problem}
	\hat {\mathbf{{b}}}^{\rm } 
	= \argmin_{\mathbf{b}\in\mathbb{Z_+}^{N_{\rm RF}}
	} \sum_{i=1}^{N_{\rm RF}}\mathcal{E}_{i}(b_i)
	%\\ \nonumber
	\quad \text{s.t.} \quad P_{\rm tot} \leq p
\end{gather}
where $\mathbb{Z_+}$ represents the set of non-negative integers.
Note that the integer variables and non-linear functions that are the number of activated RF chains $N_{\rm act}$ and ADC-switching power consumption $P_{\rm SW}(b_i)$ require an exhaustive search to solve \eqref{eq:MMSQE_problem}.
To derive a near optimal solution with low complexity, we develop a BA algorithm in the next subsection.

%%%%%%%%%%%%%%%%%%%%%%%%%%%%%%%%
\subsection{ADC Bit Optimization Algorithm}
\label{ssec:BA}
%%%%%%%%%%%%%%%%%%%%%%%%%%%%%%%%

%Modifying the problem \eqref{eq:MMSQE_problem} with proper relaxations, we can develop an algorithm which derives a near optimal solution for \eqref{eq:MMSQE_problem} with low complexity.
We modify the problem \eqref{eq:MMSQE_problem} with proper relaxations.
We first relax the domain of the variables to real numbers (${\bf b} \in \mathbb{Z}^{N_{\rm RF}}_+  \to {\bf b} \in \mathbb{R}^{N_{\rm RF}} $), and then, relax the ADC power consumption model \eqref{eq:ADCpower} to be differentiable for any value of the quantization bits ($P_{\rm ADC}(b) = cf_s2^b$, $ b \in \mathbb{R}$).
% which can be handled by re-mapping the solution to non-negative integers.
Despite the real number relaxation, it is still difficult to solve \eqref{eq:MMSQE_problem} due to non-linear factors in $P_{\rm tot}$: the switching power consumption $P_{\rm SW}(b_i)$ and the number of activated RF chains $N_{\rm act}$.

To resolve this challenge, we replace $P_{\rm SW}(b_i)$ with a constant $\bar P_{\rm SW}$, which is the average switching power consumption when applying the optimal solution of \eqref{eq:MMSQE_problem}.
For a given system environment, $\bar P_{\rm SW}$ depends only on the power constraint $p$, i.e., $\bar P_{\rm SW}$ is a function of $p$.
Since the amount of switching power consumption directly affects the power available for quantization, 
%and the average switching power $\bar P_{\rm SW}$ is not known, 
an accurate estimation of $\bar P_{\rm SW}$ is required.
Accordingly, we propose an estimation approach to $\bar P_{\rm SW}$ by using off-line training and modeling in Section \ref{ssec:Psw}.

Finally, the non-linearity of $N_{\rm act}$ can be addressed by following the general steps:
\begin{labeling}{Step 3.}
\item [Step 1.] Sort variances $\sigma^2_{x_i}$ to be $\sigma^2_{x_1} \geq \sigma^2_{x_2} \geq \cdots \geq \sigma^2_{x_{N_{\rm RF}}}$
\item [Step 2.] Derive a solution $\hat {\bf b}$ for \eqref{eq:MMSQE_problem} when the first $M$ RF chains are considered to be used ($N_{\rm act} = M$)
\item [Step 3.] Find the optimal value of $M$ that gives the smallest error $\sum_{i = 1}^{N_{\rm RF}} \mathcal{E}_i(\hat b_i)$ with its corresponding solution
\end{labeling}
Let $M_{\rm opt}$ represent the optimal value of $M$, then the solution with $N_{\rm act} = M_{\rm opt}$ is the near optimal solution for \eqref{eq:MMSQE_problem}.
Intuitively, Step 1 is designed from the fact that ADCs with larger channel gains are likely to be assigned with more quantization bits \cite{choi2016adc, choi2017low}.
For Step 2, we can derive a closed-form solution of the relaxed minimization problem with $N_{\rm act} = M$ by solving the KKT conditions.
Binary search can be applied in Step 3 as the total MSQE  $\sum_{i = 1}^{N_{\rm RF}} \mathcal{E}_i(b_i)$ decreases to the point of $M_{\rm opt}$ and increases beyond $M_{\rm opt}$.
Hence, the search complexity reduces to $\mathcal{O}(\log_2 N_{\rm RF})$. 
%Solutions from this real-number relaxation can be re-mapped to non-negative integer numbers by using 
%In addition, despite the fact that the ADC power consumption  $P_{\rm ADC}(b) = 0 \text{ for } b \leq 0$, we relax the problem by considering \eqref{eq:ADCpower} for $b \in \mathbb{R}$.

%Assuming the ADC resolution switches that update at every channel coherence time, we propose BA algorithms to increase the achievable rate when available power is limited, which is different from the previously proposed greedy BA approach under a bit constraint \cite{choi2016space}.

Assuming $\sigma^2_{x_1} \geq \sigma^2_{x_2} \geq \cdots \geq \sigma^2_{x_{N_{\rm RF}}}$ from Step 1, we now derive a solution of the relaxed problem at the $s^{th}$ stage of binary search where the first $M_s$ RF chains are considered to be used ($N_{\rm act} = M_s$).
% for Step 2 and 3.
Consequently, we deactivate the ADCs corresponding to the rest of the RF chains, i.e., $b_i = 0$ for $i = M_s+1, \cdots, N_{\rm RF}$.
\vspace{0.047 in}
Recall that we previously relaxed (i) ${\bf b} \in \mathbb{Z}^{N_{\rm RF}}_+  \to {\bf b} \in \mathbb{R}^{N_{\rm RF}} $ with $P_{\rm ADC}(b) = cf_s2^b$ for $b \in\mathbb{R}$ and (ii) $P_{\rm SW}(b_i) \to \bar P_{\rm SW}$.
Then, the relaxed minimization problem at the $s^{th}$ stage of binary search is formulated as
% EQUATION
\begin{gather}
	\label{eq:Relaxed_Problem}
	{\mathbf{b}}^{\rm s}
	= \argmin_{\mathbf{b}\in\mathbb{R}^{M_{s}}} \sum_{i=1}^{M_s}\mathcal{E}_{i}(b_i)
	%\\ \nonumber
	\quad \text{s.t.} \quad 2\sum_{i=1}^{M_{s}} P_{\rm ADC}(b_i) \leq \tilde p
	%\ {\bf b}\in\mathbb{R}^{M_{s}}
\end{gather}
where 
\small
\begin{align}
\nonumber
\tilde p = &\,p-N_rP_{\rm LNA} - 2{N_{\rm RF}}\bar P_{\rm SW} - P_{\rm BB} - M_s(N_rP_{\rm PS}+P_{\rm RFchain}). 
\end{align}
\normalsize
Note that the power constraint for ADC quantization $\tilde p$ cannot be negative, which restricts the maximum number of activated RF chains.
Moreover, we can exclude the RF chains with zero channel gain -- $M_s$ is less than or equal to the total number of non-zero $\sigma^2_{x_i}$.
Therefore, the number of the considered RF chains $M_s$ needs to satisfy the following condition:
\begin{align}
\label{eq:Ms_condition}
M_s \leq \min\big (M_{\rm max}, N_{\rm RF}\big)
\end{align}
with
\footnotesize
\begin{align*}
&M_{\rm max}
= \min\bigg(\bigg \lfloor \frac{p-N_rP_{\rm LNA} - 2{N_{\rm RF}}\bar P_{\rm SW} - P_{\rm BB}}{N_rP_{\rm PS}+P_{\rm RFchain} }\bigg \rfloor, \sum_{i=1}^{N_{\rm RF}} \mathds{1}_{\{\sigma^2_{x_i} \neq 0\}} \bigg).
\end{align*}
\normalsize
%We futher assume that $M_s$ is equal or less than the total number of non-zero $\sigma^2_{x_i}$:
%\begin{align}
%\label{eq:nonzero}
%M_s \leq \sum_{i=1}^{N_{\rm RF}} \mathds{1}_{\{\sigma^2_{x_i} \neq 0\}}
%\end{align}
where $\mathds{1}_{\{\cdot\}}$ is the indicator function.
For \eqref{eq:Relaxed_Problem} with $M_s$ under \eqref{eq:Ms_condition}, the closed-form solution is derived in Proposition \ref{prop:BA}.  
 % PROPOSITION
\begin{proposition} 
\label{prop:BA}
With $M_s$ under \eqref{eq:Ms_condition}, the optimal number of quantization bits for the convex minimization problem \eqref{eq:Relaxed_Problem} is 
\begin{align}
\nonumber
b^{\rm s}_i = &\log_2 \frac{\tilde p}{2c\,f_s} + \log_2\left(\frac{\|[{\bf H}_{\rm b}]_{i,:}\|^{\frac{2}{3}}}{\sum_{j = 1}^{M_{s}}\|[{\bf H}_{\rm b}]_{j,:}\|^{\frac{2}{3}}}\right),\ i = 1,\cdots, M_{s}.
\end{align}
\end{proposition}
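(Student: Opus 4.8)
The plan is to solve the convex program \eqref{eq:Relaxed_Problem} directly through its Karush--Kuhn--Tucker (KKT) conditions, since the preceding relaxations have already reduced the task to a single-constraint continuous problem. First I would confirm convexity so that KKT stationarity is both necessary and sufficient: each objective term $\mathcal{E}_i(b_i)=\frac{\pi\sqrt3}{2}\sigma_{x_i}^2 2^{-2b_i}$ from \eqref{eq:msqe_rev} is a positive multiple of $e^{-2b_i\ln 2}$ and hence convex in $b_i$, while each constraint term $P_{\rm ADC}(b_i)=c f_s 2^{b_i}$ from \eqref{eq:ADCpower} is a positive multiple of $e^{b_i\ln 2}$ and also convex. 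The objective is therefore convex and the feasible set defined by $2\sum_i P_{\rm ADC}(b_i)\le\tilde p$ is convex; since the relaxed domain $\mathbb{R}^{M_s}$ carries no sign constraints, the multiplier structure is minimal.

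Next I would form the Lagrangian $L(\mathbf{b},\mu)=\sum_{i=1}^{M_s}\mathcal{E}_i(b_i)+\mu\big(2\sum_{i=1}^{M_s}P_{\rm ADC}(b_i)-\tilde p\big)$ with a single multiplier $\mu\ge 0$ and impose stationarity $\partial L/\partial b_i=0$. Differentiating gives $-\pi\sqrt3\,\sigma_{x_i}^2(\ln 2)2^{-2b_i}+2\mu c f_s(\ln 2)2^{b_i}=0$, which rearranges to $2^{3b_i}=\frac{\pi\sqrt3\,\sigma_{x_i}^2}{2\mu c f_s}$, i.e. $2^{b_i}=\big(\frac{\pi\sqrt3\,\sigma_{x_i}^2}{2\mu c f_s}\big)^{1/3}$. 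Substituting $\sigma_{x_i}^2=p_u\|[\mathbf{H}_{\rm b}]_{i,:}\|^2$ then shows $2^{b_i}$ is proportional to $\|[\mathbf{H}_{\rm b}]_{i,:}\|^{2/3}$ with a common factor that depends on $\mu$ but not on $i$ --- this index-independence is the crux that produces the water-filling-like ratio in the claimed formula.

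To pin down the common factor I would argue the power constraint is active. Because each $\mathcal{E}_i$ is strictly decreasing in $b_i$, any slack in the budget could be used to raise some $b_i$ and strictly lower the objective, so at the optimum $2\sum_i P_{\rm ADC}(b_i)=\tilde p$, equivalently $\sum_{i=1}^{M_s}2^{b_i}=\tilde p/(2c f_s)$. Summing the stationarity expression over $i$ and matching it to this equality eliminates $\mu$ and fixes the common factor as $\frac{\tilde p}{2c f_s}\big/\sum_{j=1}^{M_s}\|[\mathbf{H}_{\rm b}]_{j,:}\|^{2/3}$, giving $2^{b_i}=\frac{\tilde p}{2c f_s}\cdot\frac{\|[\mathbf{H}_{\rm b}]_{i,:}\|^{2/3}}{\sum_{j=1}^{M_s}\|[\mathbf{H}_{\rm b}]_{j,:}\|^{2/3}}$. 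Taking $\log_2$ of both sides splits this into the two logarithmic terms of the proposition and completes the derivation.

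I do not expect a genuine obstacle, as this is a routine Lagrangian computation whose symmetry across the index $i$ delivers the closed form directly. The one point needing care is the active-constraint justification above, together with the observation that \eqref{eq:Ms_condition} guarantees $\tilde p>0$, which ensures the argument of the logarithm is positive and the solution well defined.
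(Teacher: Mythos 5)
Your proposal is correct and follows essentially the same route as the paper: the paper merely substitutes $z_i=2^{b_i}$, $w_i=\sigma_{x_i}^2$ before applying the KKT conditions, whereas you differentiate directly in $b_i$, but the key steps --- stationarity yielding $2^{b_i}\propto(\sigma_{x_i}^2)^{1/3}\propto\|[\mathbf{H}_{\rm b}]_{i,:}\|^{2/3}$ and the active budget constraint fixing the normalization --- are identical. Your explicit monotonicity argument for why the constraint must be tight is in fact slightly more careful than the paper's bare assertion.
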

\begin{proof}
The problem \eqref{eq:Relaxed_Problem} is a convex minimization problem and achieves a global minimum when the power constraint holds equal.
Let $w_i = \sigma^2_{x_i}$, $z_i = 2^{b_i}$, $f({\bf z}) = \sum_{i=1}^{M_s} w_iz_i^{-2}$, and $g({\bf z}) =\sum_{i=1}^{M_s}z_i - \frac{\tilde p}{2c\,f_s}$, then \eqref{eq:Relaxed_Problem} reduces to
% an equivalent problem as
\begin{gather}
\label{eq:eq_problem}
\hat {\bf z} = \argmin_{\mathbf{z}\in \mathbb{R}^{M_s}} {f({\bf z})}
\quad \text{s.t.}\quad g({\bf z}) = 0,\  {\bf z} > \bf 0
\end{gather}
where $\bf 0$ and the inequality in ${\bf z} > \bf 0$ denote a zero vector and element-wise inequality, respectively.
The global minimizer $\hat {\bf z}$ for the problem \eqref{eq:eq_problem} satisfies the KKT conditions: 
\begin{enumerate*}
\item  $\nabla f(\hat{\bf z}) + \mu \nabla g(\hat{\bf z}) = \bf 0$, \label{cond1}
\item $g(\hat{\bf z}) = 0$, and \label{cond2}
\item  $\hat{\bf z} > \bf 0$,\label{cond3}
\end{enumerate*}
where $\mu \in \mathbb{R}$ is a Lagrange multiplier.
%(i) $\nabla f(\hat{\bf z}) + \lambda \nabla g(\hat{\bf z}) = 0$, (ii) $g(\hat{\bf z}) = 0$, and (iii) $\hat{\bf z} > 0$.
Using the KKT condition \ref{cond1} which gives $-2w_i\hat z_i^{-3} + \mu = 0$ and condition \ref{cond2}, we can derive 
\begin{align}
\label{eq:zi}
\hat z_i = \tilde p\, w_i^{1/3}\Big/\Big(2c\,f_s\sum_{j=1}^{M_s}w_j^{1/3}\Big) > 0,\quad i = 1,\cdots, M_s.
%\hat z_i = \frac{\tilde p\, w_i^{1/3}}{2c\,f_s\sum_{j=1}^{M_s}w_j^{1/3}} > 0,\quad i = 1,\cdots, M_s.
\end{align}
The derived solution $\hat z_i$ in \eqref{eq:zi} meets the KKT conditions. 
Therefore, using the definitions of $w_i$ and $z_i$, we obtain the final solution in Proposition \ref{prop:BA}.
%\begin{align}
%\nabla f(\hat{\bf z}) + \lambda \nabla g(\hat{\bf z}) = 0&\\
%g(\hat{\bf z}) = 0&\\
%\hat{\bf z} > 0&
%\end{align}
%Replacing $c_i = \sigma^2_{y_i}$ with $c_i = \sigma^2_{x_i}$ in \eqref{eq:trans_problem} and following the same steps in the proof of Proposition \ref{thm:BA_power} in Appendix \ref{appx:BA_power}, we obtain \eqref{eq:BA_mid}.
%Then, \eqref{eq:opt_BA_rev} is obtained by putting $z_i = 2^{-2b_i}$, $\bar z = 2^{-2\bar b}$ and $c_i = \sigma^2_{x_i}$ into \eqref{eq:BA_mid}.
\end{proof}

Proposition~\ref{prop:BA} illustrates that increasing the number of considered RF chain $M_s$ in $\tilde p$ decreases that of ADC bits $b^{\rm s}_i$ while it increases that of non-zero ADC bits activating more ADCs. 
Due to such trade-off between the resolution of ADCs and the number of activated ADCs, there exists an optimal number of RF chains $M_{\rm opt}$ gives a minimum total quantization error with
\begin{align}
\label{eq:binary_search}
\hat {\bf b}^{\intercal} = 
\begin{bmatrix}
\max({\bf b}^{\rm s},{\bf 0})^{\intercal}, 
{\bf 0}^{\intercal} 
\end{bmatrix} 
\in \mathbb{R}^{1\times N_{\rm RF}}
\end{align}
where $\max(\cdot)$ is the element-wise max operator and $(\cdot)^{\intercal}$ denotes the transpose.
After finding the optimal solution $\hat {\bf b}$ corresponding to $M_{\rm opt}$, we map the real number solution $\hat {\bf b}$ to its nearest integers.
% or perform the trade-off integer mapping proposed in \cite{choi2017low}, which provides power-efficient mapping. 
Algorithm \ref{algo:ADC} shows the proposed BA algorithm for ADC bit optimization.
%In Section \ref{sec:Num}, numerical studies evaluate the performance of these two mapping approaches.

% ALGORITHM %%%%%%%%%%%%%%%%%%%%%%%%%%%%%%%
\begin{algorithm}[b]
\caption{BA algorithm for ADC bit optimization}
\label{algo:ADC}
\begin{enumerate}
\item Set a power constraint $p$
\item Sort the variance $\sigma^2_{x_i}$ to be $\sigma^2_{x_1} \geq \sigma^2_{x_2} \geq \cdots \geq \sigma^2_{x_{N_{\rm RF}}}$
\item Compute $M_{\rm max}$ \eqref{eq:Ms_condition} and $ M^\dagger_{\rm max} =\min(M_{\rm max}, N_{\rm RF})$
\item Set $\mathbb S = \{1, \ldots, M^\dagger_{\rm max}\}$ 
%\item Zero out $N$-element vectors ${\bf b}$, ${\bf p}$ and ${\bf T}$
\item Binary search at stage $s$ with $M_s \in \mathbb{S}$
\begin{enumerate}
	\item $M_{s}^{\rm L} =\max(1,M_s-1)$, $M_{s}^{\rm R} =\min( M^\dagger_{\rm max}, M_s+1)$
	\item For $M_{\rm s}^{\rm L}, M_s, M_{\rm s}^{\rm R}$
	\begin{enumerate}
		\item compute ${\bf b}^{\rm s}$ using Proposition \ref{prop:BA}, then
		\item compute $\hat {\bf b} $ in \eqref{eq:binary_search}
	\end{enumerate}
	\item Compare total MSQE $\sum_{i =1}^{N_{\rm RF}}\mathcal{E}_i(\hat b_i)$ for $M_{ s}^{\rm L}, M_s, M_{s}^{\rm R}$
	\item {\bf if} $\nexists$ $M^{\rm L}_s$ or $M^{\rm R}_s$ with smaller total MSQE than $M_s$
	\begin{enumerate}
		\item map $\hat{\bf b}$ of $M_s$ to nearest integer numbers, then
		%through nearest integer mapping or trade-off mapping in \cite{choi2017low}
		\item {\bf return} the integer $\hat{\bf b}$
	\end{enumerate}
	\item {\bf else} go to the smaller half.
\end{enumerate}
\end{enumerate}
\end{algorithm}
%%%%%%%%%%%%%%%%%%%%%%%%%%%%%%%%%%%%%%%

%%%%%%%%%%%%%%%%%%%%%%%%%%%
\subsection{Training and Modeling of Average Switching Power }
\label{ssec:Psw}
%%%%%%%%%%%%%%%%%%%%%%%%%%%

In this subsection, we propose a training and modeling approach for the estimation of average ADC-switching power consumption $\bar P_{\rm SW}$, which can be computed off-line.
%Recall that we relax the ADC switching power consumption $P_{\rm SW}(b_i)$ for each ADC to the average switching power consumption $\bar P_{\rm SW}$. 
Note that if we consider smaller $\bar P_{\rm SW}$, the available power for quantization becomes larger, which can increase the actual average switching power consumption and vice versa.
Based on this intuition, we find the estimated average switching power consumption that has the minimum absolute difference from actual average switching power consumption.
For a given power constraint $p$, the training steps follow:
\begin{labeling}{Step 4.}
%[align=right, labelwidth=1.1cm] 
\item [Step 1.] Set an estimated average switching power $\bar P_{\rm est}$
\item [Step 2.] Perform Algorithm \ref{algo:ADC} over different channel realizations and calculate actual average switching power consumption $\bar P_{\rm act}$ using \eqref{eq:Psw_model}
%and measure switching power consumption using \eqref{eq:Psw_model} for each realization
\item[Step 3.] %Calculate the average measured switching power consumption $\bar P_{\rm msr}$ and 
Repeat Step 1 and 2 for different values of $\bar P_{\rm est}$
\item[Step 4.] Find $\bar P_{\rm est}^*$ with the smallest absolute difference from the corresponding $\bar P_{\rm act}^*$, and set $T_p = \bar P_{\rm est}^*$
\end{labeling}
%The error is the absolute difference between the estimated and the actual switching power $|\bar P_{\rm est} - \bar P_{\rm act}|$.
Accordingly, the training result $T_p$ is the best estimate of average switching power consumption for power constraint $p$. 
From the model in \eqref{eq:Psw_model}, we can determine a reasonable range of average switching power consumption for $\bar P_{\rm est}$ in Step 1.

Once we obtain $T_p$ for different power constraints $p$, we further perform least-squares polynomial (LSP) fitting to model $\bar P_{\rm SW}$ as a function of a power constraint $p$.
%we perform LSP fitting to model $\bar P_{\rm SW}$ as a function of $p$.
Note that this training and modeling can be computed off-line, and thus does not increase the complexity of the proposed ADC bit optimization algorithm. 
In Section \ref{sec:Num}, we provide the training data $T_p$ and its LSP fitting result.
% for a considered communication environment and its LSP fitting.
Then, the average switching power function obtained from the LSP fit is used when evaluating the performance of the proposed BA algorithm.

% FIGURE 
\begin{figure}[t]\centering
\includegraphics[scale = 0.33]{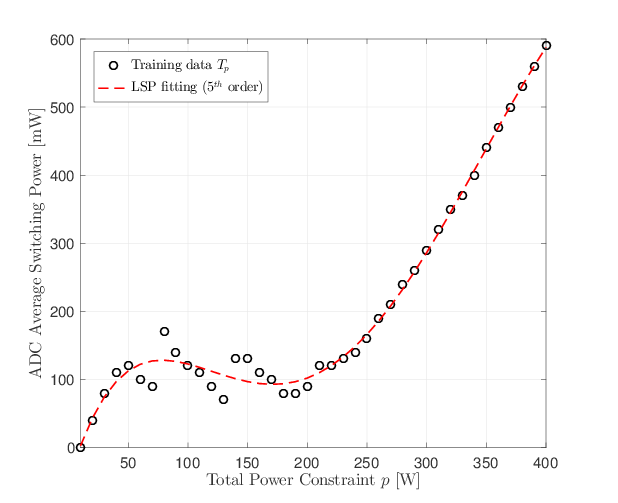}
\caption{Training data $T_p$ and least-squares polynomial fitting curve for $256$ BS antennas, $128$ RF chains, $10$ users and $13$ propagation paths per user.} 
\label{fig:FittingCurve}
\vspace{-0.8 em}
\end{figure}

%%%%%%%%%%%%%%%%%%%%%%%%%%%%
\section{Numerical Results}
\label{sec:Num}
%%%%%%%%%%%%%%%%%%%%%%%%%%%%

% \cite{zheng201510} 28 GHz 200MHz
% \cite{holma2009lte}
% \cite{3GPP}

% TABLE
%\begin{table}[!t]
%\label{tb:param}
%\centering
%\caption{Simulation Parameters}
%\begin{tabular}{ l c }
%  \hline
%  \hline
%  \qquad  Parameters  & \quad Values\\
%  \hline
%\quad Carrier frequency   &  \quad  $28$ GHz \\
%\quad Bandwidth &  \quad $1$ GHz \\
%\quad Thermal noise density & \  \  $-174$ dBm/Hz \ \  \\
%\quad Noise figure at BS &  \quad 5 dB  \\
%\quad Cell radius &  \quad $200$ m  \\
%\quad Minimum distance &  \quad $30$ m\\
%  \hline
%\hline
%\end{tabular}
%\end{table}

% Simulation Setting
We assume that users are randomly distributed over a single cell within a 200 $m$ radius, and that 
the minimum distance between the BS and users is $30$ $m$.
Considering a $28$ GHz carrier frequency, we adopt the mmWave pathloss model \cite{akdeniz2014millimeter}
% ($30 \leq d_k \leq 200$).
% for non-line-of-sight (NLOS) channels:
\begin{align}
\nonumber
PL(d_k)\text{ [dB]} = \alpha_{\rm pl} + \beta_{\rm pl} 10\log_{10}d_k + \chi,\quad \chi\sim\mathcal{N}(0,\sigma^2_{\rm s}) 
\end{align}
where $d_k$ [$m$] is the distance between the BS and $k^{th}$ user, $\alpha_{\rm pl}  = 72$ dB, and $\beta_{\rm pl}  = 2.92$.
The variance $\sigma_s^2 = 8.7$ dB and is the lognormal shadowing variance. 
The noise power with transmission bandwidth $W$ and noise figure $n_f$ is  
\begin{align}
P_{\rm noise}\text{ [dBm]} = -174 + 10\log_{10} W + n_f.
\end{align}
We assume $W = 1$ GHz so as $f_s = 1$ GHz, and $n_f=5$ dB.
Due to the normalized noise variance in the system model \eqref{eq:rx_signal}, the large scale fading gain is given as
\begin{align}
\label{eq:large_scale_fading_gain}
 {\gamma}_{k,\text{dB}} \text{ [dB]} = -(PL(d_k)   + P_{\rm noise})
\end{align}
We consider $N_r = 256$ BS antennas, $N_u = 10$ users, and $p_u = 20$ dBm transmit power. 
We further assume that the number of RF chains $N_{\rm RF}$ is half of $N_r$ and the number of captured propagation paths for each user, $L$, is about $5 \%$ of $N_r$, i.e., $N_{\rm RF}= 128$ and $L = 13$ so that $L = \lceil \frac{N_{\rm RF}}{N_u}\rceil$.

The training data $T_p$ of the average ADC-switching power is illustrated in Fig. \ref{fig:FittingCurve} with its LSP fitting curve. 
In Fig. \ref{fig:FittingCurve}, $T_p$ shows a general increasing trend as the total power constraint $p$ in \eqref{eq:MMSQE_problem} increases. 
For small $p$, however, there exist power fluctuations.
This is mainly because the optimal number of the activated RF chains $M_{\rm opt}$ changes dynamically as $p$ varies in the small $p$ regime. 
On the other hand, when $p$ reaches certain level, $M_{\rm opt}$ changes slowly because most of the RF chains with large channel gains are already activated -- it is beneficial to increase ADC resolution of currently activated RF chains than to activate more RF chains with small channel gains under a limited power budget.
As shown in Fig. \ref{fig:FittingCurve}, the LSP fitting provides a strong matching with $T_p$ using fifth-order polynomials. 
Note that polynomial coefficients for all possible $N_u^{\rm max} \times L^{\rm max}$ settings can be readily stored in a look-up table through off-line training, where $N_u^{\rm max}$ and $L^{\rm max}$ denote the maximum number of users and propagation path per user, respectively.
In the following subsection, we use the fifth-order LSP fitting model to determine $\bar P_{\rm SW}$ in \eqref{eq:Relaxed_Problem} for a given total power constraint $p$.

%%%%%%%%%%%%%%%%%%%%%%%%%%%%
\subsection{Spectral Efficiency and Energy Efficiency}
\label{subsec:EE}
%%%%%%%%%%%%%%%%%%%%%%%%%%%%

% FIGURE (EVM with revBA)
\begin{figure}[!t]
\centering
$\begin{array}{c}
{\resizebox{0.85\columnwidth}{!}
{\includegraphics{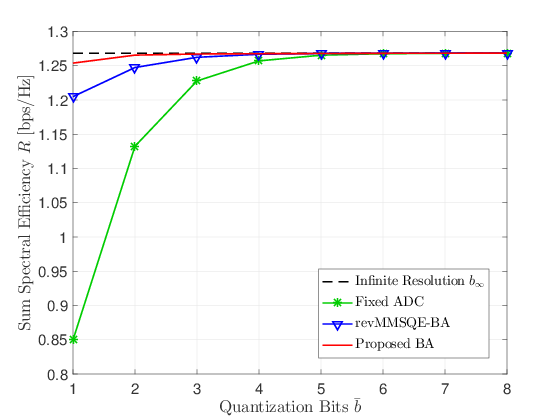}}
} \\ 
 \mbox{(a)} \\
{\resizebox{.85\columnwidth}{!}
{\includegraphics{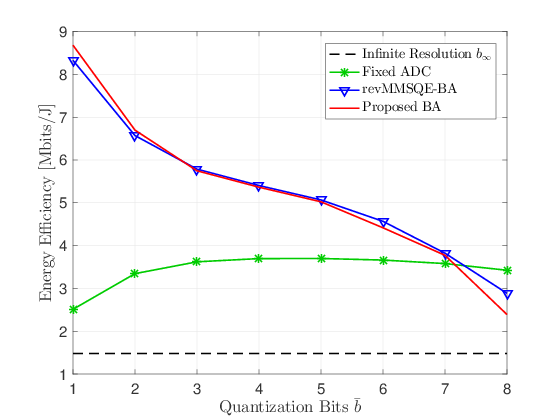}}
}\\
\mbox{(b)}
\end{array}$
\caption{
Uplink (a) sum spectral efficiency and (b) energy efficiency for $256$ BS antennas, $128$ RF chains, $10$ users, and $13$ propagation paths per user .} 
\label{fig:REE}
\vspace{-0.8 em}
\end{figure}

We adopt sum spectral efficiency and energy efficiency.
%We consider maximum ratio combining with the combiner ${\bf F}_{\rm mrc} = {\bf W}_\alpha {\bf H}_{\rm b}$ applied to \eqref{eq:AQNM2}.
The uplink sum spectral efficiency is $R = \sum_{k = 1}^{N_u} R_k$ $[\rm bps/Hz]$ where $R_k$ is the ergodic spectral efficiency of the $k^{th}$ user.
Applying maximum ratio combining ${\bf F}_{\rm mrc} = {\bf W}_\alpha {\bf H}_{\rm b}$ to \eqref{eq:AQNM2},
%the uplink sum spectral efficiency is $ R = \sum_{k = 1}^{N_u} R_k$, where
\begin{align}
\label{eq:AchievableRate}
R_{k} = \mathbb{E} \Bigg[ \log_2\bigg(1+\frac{p_u\gamma_k|\pmb \alpha^H{\bf v}_{k}|^2}{{\rm UI}_k + {\rm N}_k + {\rm QN}_k }
\bigg)\Bigg]
\end{align}
where $\pmb \alpha = [\alpha^2_1, \cdots, \alpha^2_{N_r}]^\intercal$, ${\bf v}_{k} = \left[|g_{1,{k}}|^2,\cdots,|g_{N_r,{k}}|^2\right]^\intercal$, ${\rm UI}_k = p_u\sum_{\substack{m = 1\\ m \neq k}}^{N_u}\gamma_m|{\bf g}_{k}^H {\bf W}^2_\alpha {\bf g}_m|^2$, ${\rm N}_k = {\bf g}_{k}^H{\bf W}^4_{\alpha}{\bf g}_{k}$, and ${\rm QN}_k = {\bf g}_{k}^H{\bf W}^H_\alpha \mathbf{R}_{\mathbf{n}_{\rm q}}{\bf W}_\alpha {\bf g}_{k}$.
%$\bf R_{n_{\rm q}}$ is shown in \eqref{eq:cov2}.
%\small
%\begin{align}
%\nonumber
%\Psi_k = p_u\sum\limits_{\substack{m = 1\\ m \neq k}}^{N_u}\gamma_m|{\bf g}_{k}^H {\bf W}^2_\alpha {\bf g}_m|^2 + {\bf g}_{k}^H{\bf W}^4_{\alpha}{\bf g}_{k} + {\bf g}_{k}^H{\bf W}^H_\alpha \mathbf{R}_{\mathbf{n}_{\rm q}}{\bf W}_\alpha {\bf g}_{k}.
%\end{align}
%\normalsize
%Here, $\pmb \alpha = [\alpha^2_1, \cdots, \alpha^2_{N_r}]^\intercal$, ${\bf v}_{k} = \left[|g_{1,{k}}|^2,\cdots,|g_{N_r,{k}}|^2\right]^\intercal$, and $\bf R_{n_{\rm q}}$ is shown in \eqref{eq:cov2}.
%Using the sum spectral efficiency $R$, 
Then, energy efficiency is \cite{mo2016hybrid}
\begin{align}
\eta_{\rm EE} = {R\,W}/{P_{\rm tot}} \quad {\rm bits/Joule}.
%$\eta_{\rm EE} = \frac{RW}{P_{\rm tot}}\ [{\rm bits/Joule}]$.
\end{align}
We assume $P_{\rm LNA} = 20$ mW, $P_{\rm PS} = 10$ mW, $P_{\rm RFchain} = 40$ mW, and $P_{\rm BB} = 200$ mW \cite{mo2016hybrid}. 
We consider $c = 494$ fJ/conv-step \cite{orhan2015low} for the quantization power in \eqref{eq:ADCpower}. 
In terms of the switching power in \eqref{eq:Psw_model}, we assume $c_{\rm sw} = 3.47 $ (or $0.94$) mW/conv-step when $b_i > b^{\rm p}_i$ (or $ b_i < b^{\rm p}_i$) \cite{nahata2004high}.
%and $c_{\rm sw} = 0.94$ mW/conv-step when $b_i < b^{\rm p}_i$.
%, which is a typical achievable value at 1 GHz

We evaluate four different approaches: (1) infinite resolution $b_\infty = 12$ bits, (2) conventional fixed ADC,
% where all ADCs are equipped with equal number of bits, 
(3) revMMSQE-BA in \cite{choi2017low}, and (4) proposed-BA algorithm.
 %with trade-off integer mapping as in \cite{choi2017low} and (5) with nearest integer mapping.
%We assume $b_\infty = 12$ for infinite resolution.
We enforce harsh simulation constraints which increase switching power consumptions: resolution switching occurs at every symbol transmission by assuming channel coherence time equal to symbol duration, and channel realizations from one coherence time to another are uncorrelated by assuming that the users are randomly dropped for every channel realization.
To evaluate the considered approaches, we first consider $\bar b$ quantization bits for the fixed ADCs. 
Then, the total ADC power of such fixed-ADC system is used for a total ADC power constraint of the revMMSQE-BA as in \cite{choi2017low}.
%Once the revMMSQE-BA is performed under the constraint, 
Then, the total receiver power consumption of the system with revMMSQE-BA is adopted for a power constraint of the proposed-BA system.
%Consequently, for each $\bar b$, the fixed-ADC and the revMMSQE-BA systems consumes similar total ADC quantization power, and the revMMSQE-BA and proposed-BA systems consumes similar total receiver power.
%Finally, simulation results are obtained over different $\bar b$.

In Fig. \ref{fig:REE}, the sum spectral efficiency and the energy efficiency are simulated over different $\bar b$.
%The proposed-BA with two different integer mappings shows similar performance to each other as the change of ADC quantization power consumption due to integer mapping is small compared to the total receiver power consumption.
In Fig. \ref{fig:REE}(a), the proposed BA shows the highest sum spectral efficiency in the small $\bar b$ range ($\bar b \leq 3$) and converges to the infinite-resolution system faster than the other cases.
% while the mixed-ADC case discloses the worst performance as shown in Fig. .
In Fig. \ref{fig:REE}(b), the proposed BA also provides the highest energy efficiency in the small $\bar b$ range with the highest rate.
This indicates that the proposed BA algorithm eliminates most of the quantization distortion requiring the minimum power consumption.
Accordingly, we can employ existing state-of-the-art digital beamformers to the power-constrained system when using the proposed BA algorithm as it makes the quantization error negligible in the low-resolution regime.
%Since the proposed BA system uses the total power consumption of the revMMSQE-BA system for its power constraint, the energy efficiency of the proposed-BA is similar to that of the revMMSQE-BA after the convergence of spectral efficiency.
In the large $\bar b$ range ($\bar b >7$), the energy efficiency of the proposed BA  and revMMSQE-BA becomes lower than that of the fixed ADC due to the dissipation of power consumption in resolution switching. 
The proposed BA, however, achieves the sum spectral efficiency close to the upper bound in the small $\bar b$ range ($\bar b \leq 3$), and thus, only the energy efficiency in $\bar b \leq 3$ is worth to consider.
%Regarding the rate and energy efficiency, however, it is worthless to consider the number of constraint bits above $\bar b = 6$ because the sum rate of the revMMSQE-BA is already comparable with the infinite-resolution system even around $\bar b = 4$. 
%The numerical results validate the spectral and energy efficiency of using the proposed BA algorithm.

%%%%%%%%%%%%%%%%%%%%%%%%%%%%
\section{Conclusion}
\label{sec:Con}
%%%%%%%%%%%%%%%%%%%%%%%%%%%%

In this paper, we develop an algorithm which minimizes quantization error of desired signals by allocating an optimal number of quantization bits under constrained total receiver power for mmWave communications.
%Array steering vectors were used for analog beamforming to reduce the number of RF chains for the considered hybrid architecture with resolution-adaptive ADCs.
Proposing an off-line training and modeling approach to estimate the average resolution switching power, we convert the quantization error minimization problem to a convex optimization problem to resolve non-linearities in the problem. 
Then, we derive a closed-form solution 
%for a fixed number of activated RF chains.
and utilize the solution in the binary search to find the optimal number of activated RF chains which leads to the smallest quantization error, providing a near optimal ADC bit configuration.
Simulation results validate the spectral and energy efficiency of the proposed algorithm.
In particular, existing digital beamforming techniques for hybrid mmWave receivers can be applied to the considered system with the proposed BA algorithm as it makes the quantization error negligible in the low-resolution regime.
Therefore, this paper provide a novel solution to establish spectrum- and energy-efficient systems for mmWave receivers by adapting quantization resolution.
\bibliographystyle{IEEEtran}
\bibliography{Globecom17.bib}
\end{document}